\documentclass[a4paper,UKenglish,cleveref,autoref,thm-restate]{lipics-v2021-mod}

\hideLIPIcs
\nolinenumbers

\title{Warm-Starting All-Pairs Shortest Paths with Predictions}

\author{Adam Polak}{Bocconi University, Milan, Italy \and \url{https://adampolak.github.io/}}{adam.polak@unibocconi.it}{https://orcid.org/0000-0003-4925-774X}{}

\author{Jonas Schmidt}{Bocconi University, Milan, Italy \and \url{https://schmidtjonas.github.io/}}{jonas.schmidt2@phd.unibocconi.it}{https://orcid.org/0000-0002-1115-3868}{}

\authorrunning{A. Polak and J. Schmidt}

\Copyright{Adam Polak and Jonas Schmidt}

\supplement{\url{https://github.com/adampolak/warm-starting-apsp}}

\funding{Funded by the Italian Ministry of University and Research (MUR) under Ministerial Decree No.~18169 of 17 November 2025 -- FIS 3 Call CUP: J53C25002160001}

\acknowledgements{We thank Antonios Antoniadis and Yasamin Nazari for helpful discussion.}

\usepackage{amsfonts}
\usepackage{pgfmath}
\usepackage{mathtools}
\usepackage{comment}

\usepackage[ruled,vlined,linesnumbered]{algorithm2e}
\SetKwInput{KwInput}{Input}
\SetKwInput{KwOutput}{Output}
\DontPrintSemicolon
\SetKwComment{Comment}{// }{}
\SetArgSty{upshape}
\SetKwProg{Def}{fn}{:}{}

\SetKwFunction{FComputeEstimates}{ComputeEstimates}
\SetKwFunction{FVerifyEstimates}{VerifyEstimates}
\SetKwFunction{FFixMistakes}{FixMistakes}
\SetKwFunction{FFullAlg}{FullAlgorithm}

\DeclareMathOperator*{\otilde}{\Tilde{\mathcal{O}}} 
\DeclareMathOperator*{\bigo}{\mathcal{O}} 

\usepackage{amssymb}
\newcommand{\domprod}{\mathbin{\text{\textcircled{$<$}}}}

\newtheorem{problem}[theorem]{Problem}
\crefname{problem}{problem}{problems}

\crefname{enumi}{}{} 

\newtheorem{open}{Open Problem}

\DeclareMathOperator{\jumpOp}{U}
\DeclareMathOperator{\detourOp}{detour}

\newcommand*{\detour}[4]{\detourOp_{{#1}, {#2}, {#3}}({#4})}
\newcommand*{\jump}[3]{\jumpOp({#1}, {#2}, {#3})}
\newcommand*{\err}[3]{\eta({#1}, {#2}, {#3})}

\begin{document}

\maketitle

\begin{abstract}
One of the three key hypotheses of fine-grained complexity asserts that computing All-Pairs Shortest Paths (APSP) requires cubic time, up to subpolynomial factors, in the worst case. We initiate the study of APSP in the paradigm of algorithms with predictions, also known as learning-augmented algorithms. We propose an APSP algorithm that takes as additional input a \emph{prediction} (e.g., given by a model learned from similar instances seen in the past) consisting of sets of vertices causing the shortest \emph{detour} for each pair of vertices. The algorithm runs in time $\mathcal{O}(n^{2.83} + \eta n)$, where $\eta$ denotes the \emph{prediction error} defined as the number of pairs of vertices for which, informally speaking, the prediction was not sufficient to compute and certify optimality of the shortest path length. This is already subcubic when the prediction error is (polynomially) smaller than its maximum possible values $n^2$, i.e., whenever the prediction is at least slightly better than terrible.

We build on the co-nondeterministic algorithm for the Exact Triangle problem by Chan, Vassilevska Williams, and Xu (STOC 2023), essentially enabling this algorithm to detect mistakes in the nondeterministic certificate and recover from them.

Our result constitutes the first necessary step towards designing learning-augmented algorithms for problems with known fine-grained lower bounds conditioned on the APSP Hypothesis.
\end{abstract}

\section{Introduction}

Given a directed edge-weighted graph $G=(V,E,w)$, the All-Pairs Shortest Paths (APSP) problem asks to compute for each pair of vertices $u, v \in V$ the distance from $u$ to $v$, i.e., the minimum total weight of edges on a path from $u$ to $v$. In $n$-vertex graphs, APSP can be easily solved in time $\bigo(n^3)$, e.g., using the Floyd--Warshall algorithm~\cite{Roy59,Floyd62a,Warshall62}, or, when the edge weights are non-negative, by running Dijkstra's algorithm~\cite{Dijkstra59} from each vertex. The fastest known APSP algorithm~\cite{Williams18} shaves off only a factor of $2^{\bigo(\sqrt{\log n})}$, and the APSP Hypothesis~\cite{VWW18} asserts that the cubic running time is optimal up to subpolynomial factors. It is one of the three main hypotheses of fine-grained complexity~\cite{VW18}, alongside the 3SUM Hypothesis and Strong Exponential Time Hypothesis (SETH).

Imagine we solve many instances of APSP that are in some sense \emph{similar} to each other, e.g., many snapshots of a road network taken over time with edge weights representing estimated travel times that get updated between the snapshots.

\begin{center}\itshape
Can we use a common structure shared by similar APSP instances \\ in order to solve them faster?
\end{center}

This question fits into the recent line of research on learning-augmented algorithms, also known as algorithms with predictions, see, e.g., \cite{MitzenmacherV20,LindermayrM22}.
In particular, it is the kind of question asked in papers on \emph{warm-starting} algorithms (e.g., \cite{DinitzILMV21,SakaueO22a,BaiC23}). There, the algorithm's input is enriched with a \emph{hint} or \emph{prediction}, produced, e.g., by a model trained on previously solved similar instances. The algorithm must remain correct unconditionally, but its running time is allowed to depend smoothly on the quality of the prediction -- quantified by an error measure denoted by $\eta$. In particular, even a mildly informative prediction should already yield a provable speedup.

Many warm-starting results to date address problems that already admit algorithms running in almost-linear time, e.g., Maximum Flow~\cite{DaviesMVW23,PolakZ24,DaviesVW24}, Negative-Weight Single-Source Shortest Paths~\cite{ChenSVZ22,LattanziSV23}, and Sorting~\cite{BaiC23}.
There, potential improvements from predictions are limited to shaving subpolynomial factors. APSP is qualitatively different: the best worst-case running times remain essentially cubic, and any provable polynomial improvement over $n^3$ is of independent interest. This makes APSP a natural test case for understanding whether predictions can help circumvent fine-grained barriers without giving up on worst-case correctness.
Moreover, there are many problems with fine-grained lower bounds conditional on the APSP Hypothesis (see, e.g., \cite{VWW18,VW18}). Consequently, a learning-augmented APSP algorithm can be viewed as a prerequisite for obtaining learning-augmented algorithms for those problems.

\subparagraph{What does it mean for APSP instances to be similar?}
If the input graph changes only slightly between instances, the natural framework to work with is dynamic algorithms, which update previously computed distances after each change (insertion or deletion of a vertex or an edge). The literature on dynamic algorithms for APSP is vast. In particular, Mao~\cite{Mao24a} gives a fully dynamic APSP algorithm with $\otilde(n^{2.5})$ worst-case vertex update time, which is believed to be likely optimal~\cite{AbrahamCK17}. This translates to subcubic total time as long as the number of input changes is much less than $\sqrt{n}$, so this regime is not the focus of the present work.

Instead, we consider a different notion of similarity in which edge weights may change significantly, yet the structure of (many) shortest paths remains stable. A prototypical example is a road network: vertices represent junctions, edges represent road segments, and the weight of an edge encodes an estimated travel time that varies with traffic. Between two timestamps the weights can fluctuate widely, but for many origin--destination pairs the fastest route still passes through essentially the same intermediate junctions. Our goal is to exploit this kind of stability.

\subparagraph{Co-nondeterministic algorithms, or what shall be predicted?}
A popular choice of prediction for warm-starting algorithms is simply a predicted solution (e.g.,~\cite{DinitzILMV21,ChenSVZ22,DaviesMVW23,PolakZ24,DaviesVW24}). A difficulty we face is that, for APSP, verification is known to be essentially as hard as computation. More precisely, already the problem of deciding whether a presumed distance matrix satisfies the triangle inequality admits a truly subcubic algorithm if and only if APSP does~\cite{VWW18}. In other words, even if our learning-augmented APSP algorithm was provided with a presumably untrusted distance matrix that happened to be perfectly correct, it could not know it without first spending cubic time on verification. A useful prediction must therefore include auxiliary information that enables fast verification, and ideally also enables the algorithm to recover when only a small part of the prediction is wrong.\footnote{This also distinguishes our work from learning-augmented \emph{approximation} algorithms, which too use predicted solutions, and manage to beat conditional lower bounds based on $\mathrm{P}\ne \mathrm{NP}$ or the Unique Games Conjecture, but do not need to verify correctness of predictions~\cite{ergun2022learningaugmented,aamand25improved,AEPV25,ghoshal25constraint}.}

This naturally leads us to \emph{co-nondeterministic} algorithms. Such algorithms, running in truly subcubic time, are known for APSP~\cite{CarmosinoGIMPS16,ChanWX23}. Informally, they assert the existence of short witnesses whose validity can be quickly checked. In our setting, the prediction plays the role of such a witness. However, not every co-nondeterministic approach is suitable for us. For instance, the construction of~\cite{CarmosinoGIMPS16}, which relies on hashing after a reduction to the so-called Exact Triangle problem, does not appear to degrade gracefully under small inaccuracies in the witness. Instead, we build on the alternative framework based on Fredman's trick and dominance product~\cite{ChanWX23}, which was originally proposed to handle real weights.

\subsection{Our contribution}

We propose an APSP algorithm that takes as prediction a \emph{certificate} consisting of sets of vertices causing the shortest \emph{detour} for each pair of vertices. The algorithm runs in time $\bigo(n^{2.83} + \eta n)$, where $\eta$ denotes the \emph{prediction error} defined as the number of pairs of vertices for which, informally speaking, the prediction was not sufficient to compute and certify optimality of the shortest path length. In this section we define all these notions formally.

\subparagraph*{Certificate.}
For an $n$-vertex weighted graph $G=(V,E,w_G)$, a \textbf{\emph{certificate}} is a matrix $C \in \mathcal{P}(V)^{V\times V}$ that contains, for every $u,v \in V$, a set $C[u,v] \subseteq V$ of exactly $q$ vertices, where $q \le n$ is a parameter to be chosen. The certificate hence has size $\sum_{u,v \in V(G)}|C[u,v]| = n^2q$.

This describes the form of a certificate. Let us now give some intuition behind its semantics. In a ground-truth certificate, each entry $C[u,v]$ should be the set of ``best intermediate vertices'' for a path from $u$ to $v$ in $G$.
More formally, for a $V \times V$ matrix $D$, we define the \textbf{\emph{detour}} of a vertex $w$ as
\[\detour{D}{u}{v}{w} \coloneqq D[u,w] + D[w,v] - D[u,v].\]
Since later we work with estimated distances, which correspond to lengths of paths that are not necessarily shortest, we define the detour for general matrices. Ideally, the certificate entry $C[u, v]$ should contain vertices $w$ with the $q$ smallest values of $\detour{D^*}{u}{v}{w}$, where $D^*$ is the true shortest paths matrix of $G$.

We note that breaking ties is not important here, since we later only need to make sure that vertices with strictly smaller detour than the ones in the certificate are also in the certificate. Hence, one can break ties, for example, by index.
Clearly, one can compute a ground-truth certificate in time $\bigo(n^3)$ with any APSP algorithm.

\subparagraph*{Distance estimate matrix.}
Morally, our error measure tries to capture the impact of inaccuracies in the certificate $C$ that lead the verification algorithm to fail verifying the true shortest paths matrix $D^*$.
However, since the algorithm does not know $D^*$, we define the error based on an estimate of $D^*$ that is guided by the certificate.
For a graph $G$ and certificate $C$, we define the corresponding \textbf{\emph{distance estimate matrix}} $D_{G,C}$ to be the element-wise maximum matrix that satisfies, for all $u,v \in V$, both
\begin{enumerate}
    \item $d(u,v) \le D_{G,C}[u,v]$, and
    \item $D_{G,C}[u,v] = \min \{w_G(u,v), \min_{w \in C[u,v]} D_{G,C}[u,w] + D_{G,C}[w,v]\}$.
\end{enumerate}
When $G$ and $C$ are clear from context, we drop the subscripts and write just $D$.

To put this definition in context, consider the classical Floyd--Warshall algorithm for APSP~\cite{Roy59,Floyd62a,Warshall62}. At a high level, it works by repeatedly updating the shortest path between some vertex pair $(u,v)$, i.e., a matrix entry $M[u,v]$, by considering a detour via another vertex $w$, i.e., $M[u,w] + M[w,v]$. We call such an update a \emph{relaxation of $M[u,v]$ with $w$}.
Remember that a good certificate $C[u,v]$ is supposed to contain vertices $w$ such that $D^*[u,w] + D^*[w,v]$ is close to $D^*[u,v]$. If we trust the certificate, it hence does not make sense to relax $M[u,v]$ with vertices $w$ that are not present in $C[u,v]$.

The matrix $D_{G,C}$ is the best (i.e., element-wise minimum) matrix achievable when initializing $M$ with edge weights and repeatedly performing only relaxations as indicated by the certificate, that is, relaxing $M[u,v]$ only with vertices $w \in C[u,v]$.
Therefore, if $C[u,v]$ contains at least one intermediate vertex on a shortest path from $u$ to $v$ in $G$ for every $u, v \in V$ (with a shortest path of at least two edges), then $D_{G,C}$ is the correct shortest paths matrix for $G$, i.e., $D_{G,C}[u,v] = d(u, v)$, for every $u, v \in V$.
In Section~\ref{sec:distance-estimates} we show how to compute $D_{G,C}$ efficiently, in time $\bigo(n^2(q+\log n))$, i.e., proportional to the number of allowed relaxations $n^2q$.
Surprisingly, this requires moving away from the intuition from Floyd--Warshall and towards a Dijkstra-like order of relaxations.

\subparagraph*{Error definition.}
Now we define how we measure how ``far'' the predicted certificate (given in the input to our algorithm) is from a ground-truth certificate.
Morally, the prediction error $\eta$ counts the number of pairs $(u,v) \in V^2$, for which there is a vertex $w$ with small $\detour{D^*}{u}{v}{w}$ that is missing from the certificate $C[u,v]$, where $D^*$ denotes the true shortest paths matrix.
These are exactly the entries $D^*[u,v]$ for which $C[u,v]$ does not provide enough information to verify their optimality.
Since we can only verify $D_{G,C}$ instead of $D^*[u,v]$, we define the error based on the estimate matrix.

For a graph $G$, a certificate $C$, and a parameter $p \le q$, we define the set $\jump{G}{C}{p}$ of \textbf{\emph{unverifiable pairs}} $(u,v)$ for which there is a vertex $w$ not in the certificate $C[u,v]$ that is among the $p$ vertices with the smallest $\detour{D}{u}{v}{w}$ values.
Formally,
\[ \jump{G}{C}{p} \coloneqq \big\{(u,v) \colon \exists w \in V \setminus C[u,v]\colon |\{c \in C[u,v] \colon \detour{D}{u}{v}{c} \le \detour{D}{u}{v}{w}\}| < p\big\}. \]

Now we are ready to define the \textbf{\emph{prediction error}} measure
\[\err{G}{C}{p} \coloneqq \big|\,\{(u,v) \colon D_{G,C}[u,v] \ne d_{G}(u,v)\}\,\cup\,\jump{G}{C}{p}\,\big|,\]
or just $\eta$ when the parameters are clear from the context.
Our main theorem reads as follows.

\begin{restatable}{theorem}{mainThm} \label{thm:full-alg}
    There is an algorithm that, given an $n$-vertex graph $G = (V,E,w_G)$, a certificate $C$ with $|C[u,v]| = q \le n$ for all $u,v \in V$, and a parameter $p \le q$, computes APSP in $G$ in time \[\bigo(n^2(q + \log n) + n^{3.66} / p + \err{G}{C}{p} \cdot n).\]
    For $p = \Theta(n^{0.83})$ and $q = \Theta(p)$ it runs in time $\bigo(n^{2.83} + \eta n)$.
\end{restatable}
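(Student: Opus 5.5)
We will assemble the algorithm from three phases, matching the three terms of the running time.

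\textbf{Phase one (ComputeEstimates).} Using the procedure of Section~\ref{sec:distance-estimates}, we compute $D = D_{G,C}$ in time $\bigo(n^2(q+\log n))$. Every entry of $D$ is the length of an actual walk, so $D \ge D^*$ entrywise. The pairs with $D[u,v] \ne d(u,v)$ are counted in $\eta$.

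\textbf{Phase two (VerifyEstimates).} Following Chan, Vassilevska Williams and Xu, we use Fredman's trick and dominance products. A pair $(u,v)$ is certified when $D[u,w]+D[w,v] \ge D[u,v]$ for all $w$; if this holds for all pairs, $D$ satisfies the triangle inequality and is at most the edge weights, hence $D = D^*$. For each pair we set a threshold $t_{u,v} = \max_{c \in C[u,v]} \detour{D}{u}{v}{c}$. The difficult witnesses $w$ are those with $\detour{D}{u}{v}{w} < t_{u,v}$ and $w \notin C[u,v]$, together with those with negative detour.

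For a pair $(u,v) \notin \jump{G}{C}{p}$, every $w \notin C[u,v]$ has at least $p$ certificate vertices with detour no larger than its own. The plan follows the CVX scheme:
\begin{itemize}
\item Pick a random sample $R$ of about $(n/p)\log n$ vertices.
\item For each $r \in R$, use $D[u,r]+D[r,v]$ as a pivot.
\item Fredman's trick turns a comparison of the form $D[u,w]+D[w,v]$ versus $D[u,r]+D[r,v]$ into a comparison between $D[u,w]-D[u,r]$ and $D[r,v]-D[w,v]$.
\item Each such comparison is a dominance-product instance, and counting it costs about $n^{2.83}$ per pivot in a rectangular/sparse variant.
\end{itemize}
Balancing these costs should give the $n^{3.66}/p$ term.

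The comparison counts are then checked against the number of certificate vertices in the same detour ranges, computed directly in $\bigo(n^2 q)$. A mismatch for a pair $(u,v)$ is the only way a violation can hide; it flags $(u,v)$ as suspicious. The analysis must show three things: every truly violated pair is flagged; every flagged pair lies in $\jump{G}{C}{p} \cup \{D \ne D^*\}$, using the sampling property that some $r \in R$ lands among the $p$ best detours with high probability; and unflagged pairs are correct.

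\textbf{Phase three (FixMistakes).} For the set $F$ of flagged pairs, $|F| \le \eta$, and we must compute $d(u,v)$ for each in $\bigo(n)$ amortized time. Since unflagged entries are correct, we recompute flagged entries via $d(u,v)=\min_w (d(u,w)+d(w,v))$. This is circular when the $d(u,w)$ are themselves flagged. To resolve it, we process flagged pairs in a Dijkstra-like order: grouped by source $u$, run a relaxation where unflagged $D[u,\cdot]$ entries seed the keys and each flagged pair is finalized by scanning its $n$ candidates $w$. This costs $\bigo(n)$ per flagged pair plus $\bigo(n^2\log n)$ overhead. Alternatively, we can argue that the true last edge or the middle vertex of a shortest path reduces to correct entries.

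\textbf{Parameters and main obstacle.} Summing the three phases gives the bound in Theorem~\ref{thm:full-alg}; setting $p=q=\Theta(n^{0.83})$ balances $n^2 q$ against $n^{3.66}/p$. I expect Phase two to be the main obstacle. The CVX verification assumes an exact certificate, and adapting it so that a faulty certificate yields only localized flags, with no false acceptance and false flags charged to $\eta$, while still keeping the dominance-product cost at $n^{3.66}/p$, is the delicate part.
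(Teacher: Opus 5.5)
Your Phase three rests on a claim made at the end of Phase two, that unflagged pairs are correct, and this claim is false. The dominance-product check is local. For an unflagged $(u,v)$ it certifies only $D[u,v]\le\min_{w\in V} D[u,w]+D[w,v]$. That does not give $D[u,v]=d(u,v)$ when the entries it compares against are themselves too large.

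Here is a counterexample. Take edges $u\to a\to b\to v$ of weight $1$ and $u\to b$ of weight $10$. Let $q=2$, $C[u,b]=\{u,v\}$, $C[a,v]=\{a,u\}$, $C[u,v]=\{u,b\}$, with all other sets arbitrary. Stage 1 gives $D[u,b]=10$, $D[a,v]=\infty$ and $D[u,v]=11\ne 3=d(u,v)$. Yet $\min_w D[u,w]+D[w,v]=11$, so no local test can flag $(u,v)$; for $p=1$ it is not even in $\jump{G}{C}{p}$. Your Phase three recomputes only the flagged pairs $(u,b)$ and $(a,v)$ from the supposedly correct unflagged entries, and outputs $D[u,v]=11$. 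Such pairs are paid for by the first term of $\err{G}{C}{p}$, so the time budget exists. Your algorithm simply has no mechanism for discovering them.

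\textbf{What the paper does instead.} It first proves that every wrong pair has a flagged pair on one of its shortest paths: some flagged $(a,b)$ with $d(u,v)=d(u,a)+d(a,b)+d(b,v)$. The proof is by induction on the hop count of a fewest-hop shortest path, since an unflagged wrong pair must have a wrong sub-pair. Stage 3 then propagates corrections in one global Dijkstra-like pass over all $n^2$ pairs, keyed by the current $D$; this is valid after Johnson reweighting, which leaves all detours and hence $\eta$ unchanged.
\begin{itemize}
\item The flagged pairs start out marked.
\item An extracted marked pair $(i,j)$ relaxes all $2n$ entries $D[i,k]$, $D[k,j]$ through it, and marks every entry that decreases.
\item An extracted unmarked pair relaxes only the currently marked entries.
\end{itemize}
Correct unflagged entries never change, so at most $\eta$ pairs are ever marked, each costing $\bigo(n)$. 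Correctness follows from two invariants: extracted entries are final, and every still-wrong pair has a marked, unextracted sub-pair on a shortest path. In the example, extracting the corrected $(u,b)$ lowers $D[u,v]$ to $3$ and marks it. A global order also removes the cross-source dependencies that your per-source grouping creates when using $\min_w d(u,w)+d(w,v)$, since $(w,v)$ may be a flagged entry of another source.

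\textbf{Two smaller errors in Phase two.}
\begin{itemize}
\item One $n\times n$ dominance product per pivot costs $\bigo(n^{2.658})$ via Yuster's algorithm with rectangular matrix multiplication. Over $\otilde(n/p)$ pivots this gives the $n^{3.66}/p$ term. Your figure of $n^{2.83}$ per pivot would give $n^{3.83}/p$, which is cubic at $p=n^{0.83}$.
\item The comparison must be against one designated pivot $r_{u,v}\in R$, chosen among the $p$ smallest-detour vertices of $C[u,v]$. You then count the $w$ with $D[u,w]+D[w,v]<D[u,r_{u,v}]+D[r_{u,v},v]$, once in $C[u,v]$ and once in $V$. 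Your threshold $t_{u,v}$ at the maximum certificate detour would flag pairs outside $\jump{G}{C}{p}$.
\end{itemize}
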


For a fixed $q$, the parameter $p$ can be used as a trade-off between error and verification time. 
For $p' \le p$, notice that $\jump{G}{C}{p'} \subseteq \jump{G}{C}{p}$, so also $\err{G}{C}{p'} \le \err{G}{C}{p}$.
In other words, a smaller value of $p$ leads to a smaller prediction error but a longer verification time.
In the case when the error value $\eta$ is small enough not to dominate the running time, it is reasonable to choose $p$ and $q$ to balance the remaining running time. This leads to the choice of $p = \Theta(n^{0.83})$, and gives the claimed running time guarantee.

\begin{algorithm}[b]
    \caption{The full algorithm, combining all three stages}\label{alg:combined}
    \KwInput{Graph $G$, certificate $C$, and parameter $p$}
    \KwOutput{Distance matrix $D^*$}
    \Def{\FFullAlg{$G$, $C$, $p$}}{
        $D \gets$ \FComputeEstimates{$G$, $C$} \tcp*{Stage 1 (see \Cref{alg:distance-estimates})}
        $U \gets$ \FVerifyEstimates{$D$, $C$, $p$} \tcp*{Stage 2 (see \Cref{alg:verify-cert})}
        $D^* \gets$ \FFixMistakes{$D$, $U$} \tcp*{Stage 3 (see \Cref{alg:fixing})}
        \Return{$D^*$}
    }
\end{algorithm}

We remark that the intricacies of our error definition seem hard to avoid with our current understanding of the APSP problem.
Indeed, as discussed earlier, any learning-augmented algorithm has to degenerate to a co-nondeterministic algorithm when the prediction is perfect.
The error measure hence has to capture the amount of additional work imposed on the verification algorithm by inaccuracies in the prediction.
Given that there are only two known co-nondeterministic algorithms for APSP~\cite{CarmosinoGIMPS16,ChanWX23}, out of which one is inherently ``non-smooth'' due to integer hashing~\cite{CarmosinoGIMPS16}, we would be surprised if the intricacies in the prediction error definition were completely avoidable.
In \Cref{sec:open}, we discuss minor simplifications of the error measure that could be within reach.

On the positive side, we highlight that our prediction error only depends on \emph{how many} entries of the distance estimate matrix are wrong or unverifiable, and not by \emph{how much} they are off. This is in contrast to many learning-augmented results that concern the $\ell_1$ prediction error~\cite{DaviesMVW23,PolakZ24,MitzenmacherV20,DinitzILMV21}.
Our error measure is closer in spirit to the $\ell_0$ error, which is generally smaller but harder to use~(see, e.g., \cite{ChenSVZ22}).
Employing this kind of error in our case seems possible due to two factors: (1) since verifying correctness of a presumed solution is part of our algorithm, we exactly spot where prediction errors lie, and (2) thanks to the Dijkstra-like order in which we fix these errors, we are able to guarantee correctness of an entry once we recalculate it for the first time -- no matter how far off it was.
Without this specific order, we would only be guaranteed to decrease an entry by at least 1 in a single recomputation, yielding a running time dependent on the $\ell_1$ error $\sum_{u,v}D[u,v] - D^*[u,v]$.

In Section~\ref{sec:empirical} we give some empirical evidence that one can hope for $\eta \ll n^2$ in practice.

In the remaining subsection, we go into more detail on how to compute and verify the distance estimate matrix, as well as how to order and fix wrong entries.

\subparagraph*{Algorithm outline.}

Our algorithm proceeds in three stages (see \Cref{alg:combined}). In the first stage, we compute the distance estimate matrix $D$ from a given certificate.
Remember that this corresponds to the entry-wise minimum matrix achievable by starting from $D[u,v] = w_G(u,v)$ and repeatedly relaxing entries $D[u,v]$ with vertices $w \in C[u,v]$.
A natural idea how to compute such a matrix would be running the Floyd--Warshall algorithm but performing only relaxations that are indicated by the certificate.
However, this approach fails, as shown by the simple counterexample in \Cref{fig:floyd-example}.

\begin{figure}
    \centering
    \includegraphics[width=0.35\linewidth,page=2]{figures.pdf}
    \caption{A counterexample for why the Floyd--Warshall algorithm with relaxations restricted to the certificate cannot compute $D_{G,C}$. In the given graph $G$, with four vertices and three black edges, each of weight 1, with $C$ defined by $C[a,d] = \{b\}$ and $C[b,d]=\{c\}$ (all other certificate sets equal $\{a\}$), $D_{G,C}$ should contain distances according to the colored arrows. However, when relaxing first $D[a,d]$ with $b$ and then $D[b,d]$ with $c$, only the latter blue distance is correct. Note that standard Floyd--Warshall corresponds to the case where $C[u,v] = V$, for every $u,v \in V$, where the order of considered intermediate vertices does not matter.}
    \label{fig:floyd-example}
\end{figure}

We circumvent this issue by changing the order of relaxations in Floyd--Warshall.
Doing so requires two additional ideas:
First, while we have to be careful with changing the order of relaxations in Floyd--Warshall~\cite{koo24},
a sufficient condition for correctness of a reordering of relaxations is guaranteeing that any accessed entry already contains the correct distance.
Second, we are in fact able to guarantee this condition by turning all edge weights non-negative with the well known Johnson's trick~\cite{Johnson77} from negative shortest path algorithms and relaxing in increasing order of distances.
Surprisingly, the resulting algorithm at first glance more closely resembles Dijkstra's algorithm than Floyd--Warshall.
To simplify the presentation, we perform Johnson's trick once in the beginning as discussed in Section~\ref{sec:negative-weights} and restrict our view to non-negative weights.

The second stage is responsible for verification of the distance estimates, that is, detecting entries $D[u,v]$ that could be relaxed further if we do not restrict to relaxations indicated by $C$. 
While the matrix $D$ already satisfies \begin{equation}D[u,v] \le \min_{\mathclap{w \in C[u,v]}} D[u,w] + D[w,v],\end{equation} we would need \begin{equation} D[u,v] \le \min_{w \in V} D[u,w] + D[w,v] \end{equation} to make sure that no further relaxation is possible.
The verification algorithm in the second stage finds all entries that could be relaxed as well as entries for which the certificate is not sufficient to verify (2).
To achieve that, we adapt the co-nondeterministic algorithm for Exact Triangle from Theorem 9.7 in~\cite{ChanWX23}.
Intuitively, our certificate takes the role of their nondeterministic prover.
The algorithm for this stage computes a set $U \subseteq \jump{G}{C}{p}$, such that all $(u,v) \in V^2 \setminus U$ are guaranteed to satisfy (2).

In the third stage, we use the set $U$ as a starting point to fix all incorrect distance entries in $D$.
We face a similar problem as in Stage 1 of our algorithm: This time, we want to restrict ourselves to $O(n)$ relaxations per incorrect or unverified entry in $D$ while guaranteeing that afterwards each entry satisfies (2).
To solve it, we again update distance entries Dijkstra-like in the increasing order, relying on the fact that we can assume non-negative edge weights.
Additionally, the algorithm works in a lazy manner and only relaxes entries $D[u,v]$ with vertices $w$ if at least one of $(u,v)$, $(u,w)$, or $(w,v)$ is in $U$ or has been witnessed as incorrect.
Our algorithm for this section can hence be regarded as a more sophisticated implementation of the ideas already employed in the first stage.

\subsection{Learnability}

How do we argue that it is possible to efficiently learn a certificate with a small prediction error? A standard approach (see, e.g., \cite{DinitzILMV21,ChenSVZ22,PolakZ24}) would be to use the \emph{probably approximately correct} (PAC) learning framework. There, we assume we have sample access to an unknown distribution $\mathcal{D}$ over $V \times V$ matrices of edge weights. The goal is to find a certificate that (approximately) minimizes the expected prediction error $\mathbb{E}_{G \sim D} \; \err{G}{C}{p}$. By a standard argument (see, e.g.,~\cite[Section~4]{PolakZ24}), since the number of different possible certificates is upper bounded by $\binom{n}{q}^{n^2} \le 2^{n^3}$, it suffices to consider $k=O(n^3)$ samples from $\mathcal{D}$, and solve the corresponding empirical risk minimization (ERM) problem, i.e., find a certificate that minimizes the prediction error averaged over the sample:

\begin{problem}[APSP Certificate Learning]\label{prob:learning}
    Given $k$ APSP instances $G_1, \ldots, G_k$ on the same vertex set $V$, and parameters $p$, $q$, output a certificate $C$, with $|C[u,v]| = q$ for every $u,v \in V$, that minimizes the average prediction error $\frac{1}{k}\sum_{i=1}^k \err{G_i}{C}{p}$.
\end{problem}

In Section~\ref{sec:hardness}, we show that this problem is, unfortunately, NP-hard to solve exactly. However, in our context, a constant-factor approximation algorithm would be entirely sufficient, because a constant-factor loss in the prediction error translates to a constant-factor loss in the running time of our APSP algorithm, which anyway vanishes in the big-O notation. We leave it as open problem to find such an approximation algorithm (see \Cref{sec:open}).

\subsection{Open problems}\label{sec:open}

We state two important open problems we are left with. The first is about improving and generalizing our error measure, the second is about learning certificates.

\subparagraph{Cleaner prediction error measure.}

A pair of vertices $(u,v)$ can contribute to our error measure $\eta$ in two ways.
The first way is when the computed distance estimate $D[u,v]$ from Stage 1 is not the correct distance.
The second way is when we cannot verify the distance estimate $D[u,v]$ in the second stage of our algorithm.

The second term seems unavoidable, because we need a way to beat cubic time while verifying correctness of the distance matrix. To our best knowledge, the co-nondeterministic algorithm from~\cite{ChanWX23} is the only suitable one among known ways to achieve that.
However, we see no apparent reason why it should not be possible to remove the first term entirely from the error.

\begin{open}
    Can we remove the term $|\{(u,v) \colon D[u,v] \ne d(u,v)\}|$ from the definition of the prediction error $\eta$?
\end{open}

Notice that if the first term is significantly larger than the second term, there is an incorrect entry $D[u,v]$ such that many shortest paths contain a subpath from $u$ to $v$.
One approach to solve such a case could be to rerun the first stage of our algorithm once an incorrect entry is corrected. The correction could then propagate to all superpaths, without spending linear time for each of them.
Making such an approach work formally could require phrasing the error in a more direct way without a distance estimate matrix.
A better understanding of the error term could also help in extending our approach to problems with fine-grained lower bounds based on APSP.

\subparagraph{Approximate learning algorithm.}

Our hardness proof of the learning problem in Section~\ref{sec:hardness} shows that we cannot hope to compute an optimal certificate for a collection of graphs exactly.
Naturally, one could resort to looking for an approximately optimal certificate.
The first question to answer is what should be approximated; we could aim for approximation of the error, or allow to compute a larger certificate than the optimal one that we compare against.
Relaxing both of these conditions is known as bicriteria approximation, which is certainly interesting in our setting, since a certificate that is larger by a factor of $c$ only adds at most the same factor $c$ to the running time. 

\begin{open}
    What is the best bicriteria approximation for \Cref{prob:learning}?
\end{open}

Solving instances like the one in our hardness proof requires computing a set of $k$ vertices in a hypergraph that induces the maximum number of edges, that is, Densest $k$-Subgraph (D$k$S) in hypergraphs.
For D$k$S in standard graphs, only an $\bigo(n^{1/4})$-approximation is known~\cite{bhaskara10}, polylogarithmic approximations are ruled out under the Exponential Time Hypothesis~\cite{manurangsi17}, and there are strong reasons to believe that the upper bound could be optimal~\cite{jones23}.

However, there are several distinguishing details that make our problem different.
First, we are interested in hypergraphs, which are generally harder but also less understood~\cite{applebaum13,chlamtavc17,chlamtavc18}.
Second, in contrast to D$k$S, we want to approximately minimize the number of edges that are \emph{not} in the induced subgraph. This coincides for exact computation but could be very different for multiplicative approximation.
Third, due to the large verification running time, the certificate optimally has size $p = \Theta(n^{0.83})$, which corresponds to the number of vertices in the subgraph; however, the hardness hypotheses used for D$k$S only apply to the regime $k \le \sqrt{n}$~\cite{chlamtavc17,bhaskara10}.
Finally, it is not clear how bicriteria approximation could help and only little is known for standard D$k$S~\cite{khuller09}.
Understanding what can be done and what cannot for our learning problem is hence an interesting challenge.

\subsection{Notation and preliminaries}
We only work with directed weighted graphs $G = (V,E,w_G)$ with $n = |V|$ vertices. For $(u, v) \in E$, we use $w_G(u,v)$ to denote the weight of the edge $(u,v)$. For $(u, v) \in V^2 \setminus E$ with $u \neq v$, we let $w_G(u,v) \coloneqq \infty$, and for $u = v$ we set $w_G(u,v) \coloneqq 0$.
We write $d_G(u,v)$ (or just $d(u,v)$ when $G$ is clear from context) to denote the length of a shortest path from $u$ to $v$.
For $u,v,w \in V$, we say that an algorithm \textbf{\emph{relaxes}} a matrix entry $D[u,v]$ with $D[u,w] + D[w,v]$ to refer to setting $D[u,v]$ to the 
minimum of the two values.

All our algorithms only add, subtract, and compare edge weights, so they work equally well in the word RAM model with polynomially-bounded integer edge weights and in the real RAM model with real edge weights.

\subsection{Organization} 
We discuss Stages 1, 2, and 3 of the algorithm in Sections \ref{sec:distance-estimates}, \ref{sec:verification}, and \ref{sec:fixing}, respectively.
See Section~\ref{sec:negative-weights} for a discussion on how to handle negative edge weights.
In Section~\ref{sec:hardness}, we prove that the learning problem is NP-hard to solve exactly.
Section~\ref{sec:empirical} presents some empirical evidence that one can hope for small prediction error in practice.

\section{Stage 1: Computing distance estimates}\label{sec:distance-estimates}

In the first stage of the algorithm, we compute the initial distance estimates from the certificate.
The distance estimate matrix can also be defined by specific properties, as given in \Cref{lem:distance-estimates}.
Intuitively, we use the certificate to restrict the set of possible relaxations. The distance estimates are then the smallest matrix that can be obtained by using only these relaxations.
To avoid having to do a relaxation multiple times, we compute distance estimates in increasing order.

\begin{lemma}\label{lem:distance-estimates}
    There is an algorithm that, given a graph $G$ and a certificate $C$ with $|C[u,v]| = q$ for all $u,v\in V$, computes the element-wise maximum matrix $D$ that satisfies, for all $u,v \in V$, \begin{enumerate}
        \item $d(u,v) \le D[u,v]$, and
        \item $D[u,v] = \min\{w_G(u,v), \min_{w \in C[u,v]} D[u,w] + D[w,v]\}$.
    \end{enumerate}
    The algorithm runs in time $\bigo(n^2(q + \log n))$.
\end{lemma}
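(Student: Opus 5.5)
We assume non-negative edge weights, which is justified by Johnson's trick (Section~\ref{sec:negative-weights}). We then compute $D$ by a Dijkstra-like procedure run on the ``pair graph''. Its nodes are the pairs $(u,v)\in V^2$. A pair $(u,v)$ is obtained from the pairs $(u,w)$ and $(w,v)$ for $w\in C[u,v]$, with value $D[u,w]+D[w,v]$. This is a directed hypergraph, and the dependencies are Knuth's grammar-problem style, with the superior function $(a,b)\mapsto a+b$. The algorithm keeps tentative values $T[u,v]$, initialized to $w_G(u,v)$, in a priority queue. It repeatedly extracts the pair $(u,v)$ with minimum tentative value and finalizes $D[u,v]\gets T[u,v]$. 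Whenever a pair $(u,w)$ or $(w,v)$ is finalized, the algorithm tries every relaxation in which it participates. For this we precompute, for each pair $(x,y)$, the list of triples $(u,v,w)$ with $w\in C[u,v]$ and either $(x,y)=(u,w)$ or $(x,y)=(w,v)$. There are $2n^2q$ such incidences in total, and the lists are built in $\bigo(n^2q)$ time. When both $(u,w)$ and $(w,v)$ are finalized, we relax $T[u,v]$ with $D[u,w]+D[w,v]$, which is a decrease-key. Each triple is checked at most twice, so there are $\bigo(n^2q)$ decrease-keys. With a Fibonacci heap these cost $\bigo(1)$ amortized each, and the $n^2$ extract-mins cost $\bigo(\log n)$ each, giving the total $\bigo(n^2(q+\log n))$. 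Pairs with value $\infty$ need no special treatment.

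\textbf{Correctness, step one.} The output satisfies property 2, i.e.\ it is a fixed point of the operator $F(M)[u,v]=\min\{w_G(u,v),\min_{w\in C[u,v]}M[u,w]+M[w,v]\}$. The key invariant is that pairs are finalized in non-decreasing order of value. This holds because every value we generate is $D[u,w]+D[w,v]\ge \max(D[u,w],D[w,v])$, since weights are non-negative. So when $(u,v)$ is extracted, every relaxation into it with finite value at most $D[u,v]$ has already been applied: both ingredient pairs have values at most $D[u,v]$, so they were finalized earlier. Ties need care here, and I would handle them by noting that an ingredient with value equal to $D[u,v]$ and the other ingredient $0$ yields no strict improvement anyway. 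Any relaxation arriving after $(u,v)$ is finalized has an ingredient extracted later, and that ingredient's value is at least $D[u,v]$, so the relaxation cannot improve $D[u,v]$. Hence $D=F(D)$. Also, every finite value is realized by an actual walk in $G$, so $D\ge d$, which is property 1.

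\textbf{Correctness, step two: maximality, which I expect to be the main obstacle.} The definition asks for the element-wise \emph{maximum} matrix satisfying properties 1 and 2. A priori there could be several fixed points, for example due to zero-weight cycles in the dependency structure, and our computed $D$ must dominate all of them. The plan is to show that any $M$ with $M\ge d$ and $M=F(M)$ satisfies $M\le D$. I would argue by induction in the extraction order, i.e.\ in increasing order of $D$: suppose $D[u,v]$ comes from the edge $w_G(u,v)$ or from the relaxation via $w$ on pairs finalized earlier. Then $M[u,v]=F(M)[u,v]\le w_G(u,v)$, respectively $M[u,v]\le M[u,w]+M[w,v]\le D[u,w]+D[w,v]=D[u,v]$ by the induction hypothesis, and if $D[u,v]=\infty$ there is nothing to show. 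Then $D$ satisfies both properties and dominates every other such matrix, so it is the maximum. Hence it equals $D_{G,C}$ and the definition is well posed. The tie-breaking subtlety from step one, and making sure that ``the relaxation that produced $D[u,v]$'' uses only strictly earlier finalized pairs, is where I would be most careful. It holds by construction, since a relaxation is only performed once both ingredients are finalized.
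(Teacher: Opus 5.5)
Your proposal is correct and follows the paper's proof. It uses the same Dijkstra-like processing of pairs in non-decreasing order of value with a Fibonacci heap, the same $\bigo(n^2q)$ bound on relaxations, and the same monotonicity argument from non-negative weights for property 2. The differences are only in care, not in route: you relax only once both ingredient pairs are finalized, whereas the paper relaxes against the current, possibly tentative, value and relaxes again at the second extraction. You also make the incidence lists and the tie-handling explicit, and your induction along the extraction order, showing that every fixed point of $F$ lies below $D$, is a rigorous version of the paper's one-line maximality remark.
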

    \begin{algorithm}[t]
    \caption{The algorithm for Stage 1, as in \Cref{lem:distance-estimates}}\label{alg:distance-estimates}
    \KwInput{Graph $G$ and certificate $C$}
    \KwOutput{Distance estimate matrix $D$}
    \Def{\FComputeEstimates{$G$, $C$}}{ 
        $D[i,j] \gets w_G(i,j)$\tcp*{$w_G(i,i) = 0$, $w_G(i,j) = \infty$ for $(i,j) \notin E$}
        $Q \gets$ priority queue with all $(i,j) \in V^2$ with priority $D[i,j]$\;
        \While{$Q$ is not empty}{
            $(i,j) \gets Q.\text{extractMin}()$\;
            \ForEach{$k$ such that $j \in C[i,k]$}{
                $D[i,k] \gets \min \{D[i,k], D[i,j] + D[j,k]\}$\;
                $Q.\text{decreaseKey}((i,k), D[i,k])$\;
            }
            \ForEach{$k$ such that $i \in C[k,j]$}{
                $D[k,j] \gets \min \{D[k,j], D[k,i] + D[i,j]\}$\;
                $Q.\text{decreaseKey}((k,j), D[k,j])$\;
            }
        }
        \Return{$D$}
    }
    \end{algorithm}

\begin{proof} 
    Our algorithm (see \Cref{alg:distance-estimates}) works in a Dijkstra-like manner by computing estimates in increasing order and relaxing other entries with just computed ones.
    The algorithm initializes $D$ with $D[u,v] \coloneqq w_G(u,v)$. It keeps a priority queue $Q$ that is initially filled with every pair $(u,v) \in V^2$ with priority $D[u,v]$. While $Q$ is not empty, it extracts a pair $(u,v)$ with the smallest priority from $Q$. Then, for every $w$ such that $v \in C[u,w]$ it relaxes $D[u,w]$ with $D[u,v] + D[v,w]$. Also, for $w$ such that $u \in C[w,v]$, it relaxes $D[w, v]$ with $D[w, u] + D[u,v]$. If an entry in $D$ changes, it updates its priority in $Q$ accordingly.

    \subparagraph*{Running time.}
    Notice that for every certificate entry $w \in C[u,v]$, we make only two relaxations, one when extracting $(u,w)$ and one when extracting $(w,v)$. Hence, there are at most $n^2q$ relaxations in total. Additionally, there are $n^2$ extractions from the priority queue. With a suitable priority queue, such as Fibonacci heaps~\cite{fredman1987fibonacci}, which decreases keys in time $\bigo(1)$ and extracts the minimum in time $\bigo(\log n)$,
    this adds up to the claimed running time.

    \subparagraph*{Correctness.}
    The first property follows from the fact that we initialize with edge weights and from then on only relax current distance values.
    Therefore, every element is also at most its edge weight from the initialization and relaxation steps.
    So the first part $D[u,v] \le w_G(u,v)$ of the second property holds too.
    
    Remember that edge weights are non-negative. Therefore, a relaxation can never decrease a distance entry below already extracted entries. This ensures that the pairs are extracted non-decreasingly.
    Thus, if we show that whenever a pair $(u,v)$ is extracted from $Q$ with $D[u,v] < w_G(u,v)$ we have $D[u,v] = \min_{w \in C[u,v]} D[u,w] + D[w,v]$, then it also still has to hold after the algorithm is completed.
    Remember that we can assume that all weights are non-negative, as discussed in Section~\ref{sec:negative-weights}.
    Therefore, if $D[u,v] > D[u,w] + D[w,v]$ for some $w \in C[u,v]$ when extracted, both $(u,w)$ and $(w,v)$ must have been extracted before. Then $D[u,v]$ should have been relaxed with this quantity, a contradiction. 

    Furthermore, notice that whenever an entry of $D$ is relaxed it is decreased to the maximum possible value it could have. Thus, the matrix is the unique element-wise maximum that fulfills the requirements.
\end{proof}

\section{Stage 2: Verification of the distance estimates}\label{sec:verification}

In the next step, we give a verification algorithm that verifies for each $(u,v) \in V^2 \setminus \jump{G}{C}{p}$ that it cannot be further relaxed in the matrix $D$, obtained from \Cref{lem:distance-estimates}. The algorithm might coincidentally verify more pairs, so formally, it outputs a set $U \subseteq \jump{G}{C}{p}$ and guarantees $D[u,v] = \min_{w \in V} D[u,w] + D[w,v]$ for all $(u,v) \in V^2 \setminus U$.
To achieve this, we follow the co-nondeterministic algorithm for Exact Triangle from~\cite[Theorem~9.7]{ChanWX23}.
They introduce a technique called ``Fredman's trick meets dominance product''. In this case, it allows us to speed up counting the number of smaller detour vertices with fast matrix multiplication.
We adapt the exact distances in their algorithm to minimum distances for shortest paths. Furthermore, we replace their non-deterministic hitting set and non-deterministic dominance product computation with efficient deterministic counterparts.

\begin{lemma}\label{lem:verify-cert} 
    There is an algorithm that is given a graph $G$, a certificate $C$ of size $|C[u,v]| = q$ for all $u,v \in V$, and a parameter $p$. It computes $U \subseteq \jump{G}{C}{p}$ such that for all $(u,v) \in V^2 \setminus U$ we have $D[u,v] = \min_{w \in V} D[u,w] + D[w,v]$.
    
    The algorithm runs in time $\otilde\left(n^2q + n^{3.66} / p \right)$, or $\bigo\left(n^{2.83}\right)$ for $p^* = n^{0.83}$ and $q = \bigo(p)$.
\end{lemma}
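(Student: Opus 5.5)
The plan is to follow the ``Fredman's trick meets dominance product'' approach of~\cite[Theorem~9.7]{ChanWX23}: replace their nondeterministic hitting set by a deterministic greedy one, and replace exact-triangle equalities by strict inequalities. First compute $D$ with \Cref{lem:distance-estimates}. By property~2 of that lemma, every certificate vertex $c \in C[u,v]$ satisfies $\detour{D}{u}{v}{c} \ge 0$. For each pair $(u,v)$ we compute all $q$ values $\detour{D}{u}{v}{c}$ for $c\in C[u,v]$ and select the set $P[u,v]\subseteq C[u,v]$ of the $p$ certificate vertices with the smallest detours. This takes $\bigo(n^2q)$ time in total, using linear-time selection.

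Next we build a set $R\subseteq V$ of size $\bigo((n/p)\log n)$ that hits every $P[u,v]$. The standard greedy hitting-set algorithm does this deterministically in $\otilde(n^2 p)\subseteq\otilde(n^2q)$ time: the $n^2$ sets have size $p$ over a universe of size $n$, so the greedy choice loses only a logarithmic factor relative to the fractional bound $n/p$. For each pair we fix one representative $r_{uv}\in R\cap P[u,v]$ and set $\tau[u,v]\coloneqq \detour{D}{u}{v}{r_{uv}}\ge 0$. Define $N[u,v]$ as the number of $w\in V$ with $\detour{D}{u}{v}{w}<\tau[u,v]$, and $N_C[u,v]$ as the number of $c\in C[u,v]$ with the same property. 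The latter is available in $\bigo(q)$ time per pair. We output
\[U\coloneqq\{(u,v)\colon N[u,v]\ne N_C[u,v]\}.\]

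Correctness is short. If $(u,v)\notin U$, then every $w\notin C[u,v]$ has $\detour{D}{u}{v}{w}\ge\tau[u,v]\ge 0$, and every $c\in C[u,v]$ has nonnegative detour. Hence $D[u,v]\le D[u,w]+D[w,v]$ for all $w$, and the minimum is attained at $w=v$ since $D[v,v]=0$. Conversely, if $(u,v)\in U$, there is some $w\notin C[u,v]$ with $\detour{D}{u}{v}{w}<\tau[u,v]$. The certificate vertices with detour at most $\detour{D}{u}{v}{w}$ all have detour strictly below $\tau[u,v]$, which is at most the $p$-th smallest certificate detour. So there are fewer than $p$ of them, which means $(u,v)\in\jump{G}{C}{p}$. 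This gives $U\subseteq\jump{G}{C}{p}$.

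The main work, and the step I expect to be the obstacle, is computing all $N[u,v]$ within $\otilde(n^{3.66}/p)$ time. Fredman's trick rewrites the condition $D[u,w]+D[w,v]<D[u,r]+D[r,v]$ for $r=r_{uv}$ as
\[D[u,w]-D[u,r] < D[r,v]-D[w,v].\]
The left side depends only on $(u,w)$ and the right side only on $(w,v)$. So for each fixed $r\in R$, the counts for all pairs assigned to $r$ are entries of one dominance product of $A_r[u,w]=D[u,w]-D[u,r]$ and $B_r[w,v]=D[r,v]-D[w,v]$. Using the known $\bigo(n^{2.66})$-time algorithm for $n\times n$ dominance products (via fast rectangular matrix multiplication), the $|R|$ products cost $\otilde(n^{3.66}/p)$ in total. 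The delicate points are two. First, strict versus non-strict comparisons must be handled consistently: break ties by a fixed vertex order, or note that the dominance product can count strict dominance directly. Second, we must make sure that only the entry $(u,v)$ of the product for $r=r_{uv}$ is read, which is fine since each product is computed in full. Balancing $n^2q$ against $n^{3.66}/p$ with $q=\bigo(p)$ gives $p^*=n^{0.83}$ and total time $\bigo(n^{2.83})$, up to the logarithmic factor from the hitting set. That factor is presumably absorbed by the slack in the dominance-product exponent.
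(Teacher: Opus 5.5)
Your proposal is correct and follows the paper's proof step for step. It selects the $p$ best certificate vertices per pair, builds a greedy hitting set $R$, compares a brute-force count over $C[u,v]$ with a Fredman-trick dominance-product count over all of $V$ relative to $r_{uv}$, outputs the pairs where the counts differ, and uses the same two-direction correctness argument. Your greedy bound $|R|=\bigo((n/p)\log n)$ is slightly tighter than the paper's $\bigo(n\log^2 n/p)$, but you should drop the aside about breaking ties by vertex order: with it, a non-certificate vertex whose detour is exactly $\tau[u,v]$ could put $(u,v)$ into $U$ even though $(u,v)\notin\jump{G}{C}{p}$, whereas the strict count you actually defined, which dominance product computes directly, avoids this.
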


\begin{algorithm}[t]
    \caption{The algorithm for Stage 2, as in \Cref{lem:verify-cert}, adapted from~\cite{ChanWX23}}\label{alg:verify-cert}
    
    \KwInput{Distance estimate matrix $D$, certificate $C$, and a parameter $p$}
    \KwOutput{Set of unverified pairs $U$}

    \Def{\FVerifyEstimates{$D$, $C$, $p$}}{
        \ForEach{$(i,j) \in V^2$}{
            \lForEach{$k \in C[i,j]$}{
                $\detour{D}{i}{j}{k} \gets D[i,k] + D[k,j] - D[i,j]$
            }
            $C'[i,j] \gets$ the $p$ vertices $k \in C[i,j]$ with the smallest $\detour{D}{i}{j}{k}$\;
        }
        $R \gets$ greedy hitting set for all $C'[i,j]$\;
        \ForEach{$(i,j) \in V^2$}{
            $r_{i,j} \gets \arg \min_{r \in R \cap C'[i,j]} \detour{D}{i}{j}{r}$\;
            $\text{count}_1[i,j] \gets \big|\big\{k \in C[i,j] \colon D[i,k] + D[k,j] < D[i,r_{i,j}] + D[r_{i,j}, j]\big\}\big|$\;
        }
        \ForEach(\tcp*[f]{Compute \# of better witnesses than $r$ for all $i,j$}){$r \in R$}{
            $A[i,j] \gets D[i,j] - D[i,r]$\;
            $B[i,j] \gets D[r,j] - D[i,j]$\;
            $\text{Dom} \gets A \domprod B$ \tcp*{Yuster's dominance product algorithm}
            \lForEach{$i,j \in V$ such that $r = r_{i,j}$}{$\text{count}_2[i,j] \gets \text{Dom}[i,j]$}
        }
        \Return{$\{(i,j) \colon \mathrm{count}_1[i,j] \ne \mathrm{count}_2[i,j]\}$}
    }
\end{algorithm}

\begin{proof}
    The algorithm is adapted from~\cite{ChanWX23} and works as follows (see \Cref{alg:verify-cert}). 
    \begin{itemize}
        \item For every $u,v \in V, w \in C[u,v]$, compute $\detour{D}{u}{v}{w}$. Let $C'[u,v]$ be the $p$ vertices in $C[u,v]$ with the smallest detour.
        
        \item Compute a set $R \subseteq V$ such that for all $u,v \in V$ the intersection $R \cap C'[u,v]$ is non-empty, that is, a hitting set for all $C'[u,v]$. To do so, run the greedy algorithm from Chapter 2 in~\cite{vazirani2001approximation}. We will show that $R$ has size $\bigo(n \log^2 n / p)$.
        
        \item For all $u,v$, fix an arbitrary representative $r_{u,v} \in R$ that appears in $C'[u,v]$. Any such choice works but choosing an $r$ with minimum $\detour{D}{u}{v}{r}$ minimizes the size of $U$ and hence verifies more distances.
        
        \item For all $u,v\in V$, count the number of $w \in C[u,v]$ such that $D[u,w] + D[w,v] < D[u,r_{u,v}] + D[r_{u,v},v]$ using brute force. 
        
        \item For all $u,v \in V$, count the number of $w \in V$ such that $D[u,w] + D[w,v] < D[u,r_{u,v}] + D[r_{u,v},v]$ using ``Fredman's trick meets dominance product''.
        Specifically, iterate over all $r \in R$ and count the number of $w \in V$ with $D[u,w] + D[w,v] < D[u,r] + D[r,v]$.
        To do so, construct $A$ and $B$ with $A[i,j] \coloneqq D[i,j] - D[i,r]$ and $B[i,j] \coloneqq D[r,j] - D[i,j]$.\footnote{If the term $\infty - \infty$ arises in this computation, we replace it with $+\infty$ if in $A$ and with $-\infty$ if in $B$.} Compute the dominance product $\text{Dom}$ with $\text{Dom}[i,j] = |\{k \colon A[i,k] < B[k,j]\}|$. By construction, we have \[A[i,k] < B[k,j] \Leftrightarrow D[i,k] + D[k,j] < D[i,r] + D[r,j].\]
        
        \item Output the set $U$ of pairs $(u,v)$ for which the brute force and dominance product counts are different.
    \end{itemize}

    \subparagraph*{Running time.} Since we are given $n^2$ subsets of size $q$ from a universe of size $n$, a uniformly random set of size $\bigo(n\log n/p)$ hits all of them with non-zero probability by a standard argument (see, e.g.,~\cite[Chapter~1]{alon2016probabilistic}). Hence, the optimal hitting set also has size at most $\bigo(n\log n/p)$. The greedy hitting set algorithm from Chapter 2 in~\cite{vazirani2001approximation} is a $\log n$-approximation and runs in near-linear time. We get $|R| = \bigo(n\log^2n/p)$.

    Computing the detours, the greedy hitting set, and the brute force counts takes time $\bigo(n^2q)$.
    Yuster's algorithm~\cite{Yuster09} computes dominance product in time $\bigo(n^{4-r} + n^{\omega(1,r,1) + o(1)})$ for any $r$. Here $\omega(a,b,c)$ is the rectangular matrix multiplication exponent such that it takes time $n^{\omega(a,b,c) + o(1)}$ to multiply a $n^a \times n^b$ with a $n^b \times n^c$ matrix.
    With the optimal choice of $r$ and the best known algorithm for rectangular matrix multiplication~\cite{WilliamsXXZ24, Brand_complexity_tool}, computing one dominance product takes time $\bigo(n^{2.658})$.
    In total, this gives the claimed running time.

    \subparagraph*{Correctness.}
    First, consider $(u,v) \in V^2 \setminus \jump{G}{C}{p}$, so for all $w \in V \setminus C[u,v]$, we have \[|\{c \in C[u,v]\colon \detour{D}{u}{v}{c} \le \detour{D}{u}{v}{w}\}| \ge p.\]
    Since $r_{u,v}$ is within the $p$ elements of $C[u,v]$ with minimum $\detourOp_{D,u,v}$, for all $w \in V \setminus C[u,v]$, we have $\detour{D}{u}{v}{r_{u,v}} \le \detour{D}{u}{v}{w}$.
    So, both counts in the algorithm must be the same and $(u,v) \in V^2 \setminus U$.
    
    Furthermore, if both counts are the same notice that \[\min_{w \in C[u,v]} D[u,w]+D[w,v] = \min_{w \in V} D[u,w]+D[w,v].\] So, by \Cref{lem:distance-estimates}, the condition holds for all $(u,v) \in V^2 \setminus U$.
\end{proof}

We note that it would be possible to further speed up the verification part by using a co-nondeterministic algorithm for dominance product from~\cite{ChanWX23}. However, that algorithm would need additional information about the dominance product included in the certificate. This would require more complex predictions and introduce additional sources of error, while not significantly improving the result. The best running time one could reach this way is $\otilde(n^{(6 + \omega) / 3})$, or $\bigo(n^{2.80})$ with the current bound on the matrix multiplication exponent~\cite{alman2025more}.

Intuitively, for every distance entry $D[u,v]$ that is wrong but successfully verified, there must be another wrong entry $D[u,w]$ or $D[w,v]$ that the computation relied on. This allows us later in the final algorithm to trust correctly verified entries until we discover such a witness with incorrect distance. This is formalized in the next lemma.

\begin{lemma}\label{lem:subpath_in_jumpset}
    For all $(u,v)$ with $D[u,v] \ne d(u,v)$, there is $(a,b) \in U$, potentially equal to $u$ or $v$, with $d(u,v) = d(u,a) + d(a,b) + d(b,v)$.
\end{lemma}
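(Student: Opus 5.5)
Take a counterexample $(u,v)$ with $D[u,v] \ne d(u,v)$ that has the fewest edges on a shortest $u$–$v$ path. Either $(u,v)$ itself lies in $U$, or we can pass to a shorter subpath that is also wrong.

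\textbf{Setup.} Fix $(u,v)$ with $D[u,v] \ne d(u,v)$, so $D[u,v] > d(u,v)$ by property~1 of \Cref{lem:distance-estimates}. Among all shortest $u$–$v$ paths, fix one, $P$, with the fewest edges. We induct on the number of edges of $P$, i.e., on the hop-length of a fewest-edge shortest path.

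\textbf{Base cases.} If $(u,v) \in U$, take $(a,b)=(u,v)$; then $d(u,v)=d(u,u)+d(u,v)+d(v,v)$ holds trivially. The case that $P$ is a single edge or $u=v$ cannot occur: then $D[u,v] \le w_G(u,v) = d(u,v)$, which together with property~1 forces equality.

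\textbf{Inductive step.} Now suppose $(u,v)\notin U$, so $P$ has at least two edges. By \Cref{lem:verify-cert} we have $D[u,v] \le D[u,w]+D[w,v]$ for every $w\in V$. Pick an interior vertex $w$ of $P$. Then
\[
D[u,w]+D[w,v] \ge D[u,v] > d(u,v) = d(u,w)+d(w,v),
\]
so $D[u,w]\ne d(u,w)$ or $D[w,v]\ne d(w,v)$. Say $D[u,w] \ne d(u,w)$; the other case is symmetric.

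The prefix of $P$ from $u$ to $w$ is a shortest $u$–$w$ path with strictly fewer edges than $P$. Hence the fewest-edge shortest $u$–$w$ path also has fewer edges, and induction applies to $(u,w)$. It yields $(a,b)\in U$ with $d(u,w)=d(u,a)+d(a,b)+d(b,w)$.

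\textbf{Lifting to $(u,v)$.} We need $d(u,v)=d(u,a)+d(a,b)+d(b,v)$. The direction $\le$ is the triangle inequality. For $\ge$, use $d(b,v)\le d(b,w)+d(w,v)$ to get
\[
d(u,a)+d(a,b)+d(b,v) \le d(u,a)+d(a,b)+d(b,w)+d(w,v) = d(u,w)+d(w,v) = d(u,v).
\]
In the symmetric case, $(a,b)$ lies on a shortest $w$–$v$ path, and we bound $d(u,a)\le d(u,w)+d(w,a)$ in the same way.

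\textbf{Main subtlety.} The induction must be on the hop-length of a fewest-edge shortest path, not on distance. With non-negative weights (Johnson's trick), zero-weight edges could otherwise prevent the distance from strictly decreasing. Sub-paths of a fewest-edge shortest path are themselves fewest-edge shortest paths with strictly fewer edges, so the measure does decrease. We also need $d(u,v)$ finite whenever $D[u,v]\ne d(u,v)$, which holds because $D\ge d$, so $D[u,v]\ne d(u,v)$ forces $d(u,v)<\infty$.
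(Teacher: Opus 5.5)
Your proof is correct and follows the paper's argument: split a fewest-edge shortest path at an interior vertex, use $D[u,v] \le D[u,w]+D[w,v]$ for $(u,v) \notin U$ to conclude that one half carries a wrong estimate, and recurse, with termination guaranteed by the decreasing hop-length. Your explicit lifting step, which shows the pair $(a,b)$ found for the shorter pair still satisfies $d(u,v)=d(u,a)+d(a,b)+d(b,v)$, fills in a detail the paper leaves implicit.
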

\begin{proof}
    Consider a pair $(u,v)$ with $D[u,v] = \min_{w \in V} D[u,w] + D[w,v]$ and $D[u,v] \ne d(u,v)$. Let $P$ be a shortest ($u$,$v$)-path with the fewest number of edges. If $P = (u,v)$, this is a contradiction to the assumption that $D[u,v]$ is not the correct distance by \Cref{lem:distance-estimates}.

    Otherwise, let $w$ be an intermediate vertex on $P$. Since $D[u,v] \le D[u,w] + D[w,v]$, at least one of $D[u,w]$ and $D[w,v]$ is also not the correct distance, say $D[u,w]$.
    If $(u,w) \in U$, we are done. Otherwise, we have $D[u,w] = \min_{x \in V} D[u,x] + D[x,w]$ and we can repeat the same argument with $(u,w)$ instead of $(u,v)$. This way, the number of edges on the fewest-hop shortest path decreases by at least one. Hence at some point we arrive at a pair in $U$.
\end{proof}

\section{Stage 3: Fixing the distance estimates}\label{sec:fixing}

In this section, we give the final stage of our algorithm and prove \Cref{thm:full-alg}. It works similar in spirit to Stage 1 as we fix entries in increasing order of distance. We rely on the following observation when computing the length of a shortest path from $u$ to $v$. Since all subpaths have smaller or equal distance, their values are correct. Furthermore, this means that if $D[u,v]$ is unchanged and successfully verified, we can skip all relaxations of $D[u,v]$.

\mainThm*

\begin{algorithm}[t]
    \caption{The algorithm for Stage 3, as in \Cref{thm:full-alg}}\label{alg:fixing}
    \KwInput{Distance estimate matrix $D$, unverified set $U$}
    \KwOutput{Correct distance matrix $D$}

    \Def{\FFixMistakes{$D$, $U$}}{
        $M \gets U$ \tcp*{Marked pairs}
        $Q \gets$ priority queue with all $(i,j) \in V^2$ with priority $D[i,j]$\;
        \While{$Q$ is not empty}{
            $(i,j) \gets Q.\text{extractMin}()$\;
            \uIf(\tcp*[f]{If marked, relax everything and mark changed}){$(i,j) \in M$}{
                \ForEach{$k \in V$}{
                    \If{$D[i,j] + D[j,k] < D[i,k]$}{
                        $D[i,k] \gets D[i,j] + D[j,k]$\;
                        $Q.\text{decreaseKey}((i,k), D[i,k])$\;
                        $M \gets M \cup \{(i,k)\}$\;
                    }
                    \If{$D[k,i] + D[i,j] < D[k,j]$}{
                        $D[k,j] \gets D[k,i] + D[i,j]$\;
                        $Q.\text{decreaseKey}((k,j), D[k,j])$\;
                        $M \gets M \cup \{(k,j)\}$\;
                    }
                }
            }
            \Else(\tcp*[f]{If not marked, only relax marked}){
                \ForEach{$k$ such that $(i,k) \in M$}{
                    $D[i,k] \gets \min \{D[i,k], D[i,j] + D[j,k]\}$\;
                    $Q.\text{decreaseKey}((i,k), D[i,k])$\;
                }
                \ForEach{$k$ such that $(k,j) \in M$}{
                    $D[k,j] \gets \min \{D[k,j], D[k,i] + D[i,j]\}$\;
                    $Q.\text{decreaseKey}((k,j), D[k,j])$\;
                }
            }
        }
        \Return{$D$}
    }
\end{algorithm}

\begin{proof}
    Remember that we can assume that all edge weights are non-negative by a standard preprocessing step, as discussed in Section~\ref{sec:negative-weights}.
    First, we run the algorithm from \Cref{lem:distance-estimates} to compute the matrix $D$ from $C$.
    Then, we use \Cref{lem:verify-cert} to compute the set $U$. We call all pairs $(u,v) \in U$ marked.
    During the algorithm we will relax $D$ to turn it into the correct distance matrix for $G$. If at
    any point a distance changes, we also mark the corresponding pair. The remaining algorithm (see \Cref{alg:fixing})
    is similar to the one in \Cref{lem:distance-estimates} but distributes the total amount of work only on marked pairs.
    \begin{itemize}
        \item Initialize a priority queue $Q$ with all pairs $(u,v) \in V^2$ and priorities $D[u,v]$.
        \item While $Q$ is nonempty, extract the pair $(u,v)$ with minimum $D[u,v]$, and do the following: \begin{itemize}
            \item If $(u,v)$ is marked, iterate over all $w \in V$. Relax $D[u,w]$ with $D[u,v] + D[v,w]$. Relax $D[w,v]$ with $D[w,u] + D[u,v]$. If anything changed, mark the corresponding entry and update the key in $Q$.
            \item If $(u,v)$ is not marked, iterate only over $w \in V$ where $(u,w)$ or $(w,v)$ is marked. Relax $D[u,w]$ with $D[u,v] + D[v,w]$. Relax $D[w,v]$ with $D[w,u] + D[u,v]$. Update $Q$.
        \end{itemize}
        \item When $Q$ is empty, return $D$ as the distance matrix of $G$.
    \end{itemize}

    \subparagraph*{Running time.}
    Let $(u,v)$ be a pair of vertices that does not contribute to the error $\eta$. Therefore, we have $D[u,v] = d(u,v)$ and $(u,v) \in V^2 \setminus \jump{G}{C}{p}$.
    By \Cref{lem:verify-cert}, every pair $(u,v) \in V^2 \setminus \jump{G}{C}{p}$ is also not in $U$. It is hence not marked initially. Since $D$ always contains an upper bound to the true distances in $G$, the entry $D[u,v]$ will never be changed. So, $(u,v)$ will not be marked during the algorithm. This shows that at most $\eta$ pairs will become marked during the whole algorithm. 

    Every marked pair causes the relaxation of $2n$ pairs when extracted from $Q$. Additionally, it can be relaxed by at most $2n$ unmarked pairs when they are extracted from $Q$. A marked pair is hence responsible for $\bigo(n)$ cost during the whole algorithm. All extractions from $Q$ cost $\bigo(n^2 \log n)$ using Fibonacci heaps~\cite{fredman1987fibonacci}. Therefore, the main part of the algorithm takes time $\bigo(\err{G}{C}{p} \cdot n + n^2 \log n)$.
    The distance estimate and verification parts take time $\bigo(n^2(q + \log n) + n^{3.66} / p)$ by \Cref{lem:verify-cert,lem:distance-estimates}. In total, this gives the final running time.

    \subparagraph*{Correctness.}
        We show correctness of the remaining algorithm by the following invariants.
        \begin{romanenumerate}
            \item \label{prop-corr}Every pair $(u,v)$ that is already extracted from $Q$ has $D[u,v] = d(u,v)$.
            
            \item \label{prop-subpath}For every pair $(u,v)$ with $D[u,v] \ne d(u,v)$, there is a shortest path from $u$ to $v$ that contains a (potentially equal) subpath from $x$ to $y$ such that $(x,y)$ is marked and in $Q$.
            
            \item \label{prop-marked} For every marked pair $(u,v) \in Q$, the current distance is at most the shortest path constructed from two extracted paths, i.e., \[D[u,v] \le \min_{(u,w), (w,v) \in V^2 \setminus Q} d(u,w) + d(w,v).\]
        \end{romanenumerate}

        In the beginning, (\Cref{prop-subpath}) is satisfied by \Cref{lem:subpath_in_jumpset}, the other two are trivially satisfied.
         In each step, we extract one pair $(u,v)$ with current minimum $D[u,v]$. 
        Suppose $(u,v)$ is currently extracted. First, we prove that every pair $(u', v')$ with $d(u',v') < d(u,v)$ is already extracted.
        Suppose there is another pair $(u',v')$ still in $Q$ with $d(u',v') < d(u,v) \le D[u,v] \le D[u',v']$. By (\Cref{prop-subpath}), there is a marked subpath of a shortest $(u',v')$-path still in $Q$.
        Let $(a,b)$ be a minimal such subpath.
        Since all subpaths of $(a,b)$ are either extracted or unmarked but with no marked subpath in $Q$, all subpaths of $(a,b)$ must be correct.
        If there is one such path still in $Q$, it must be shorter and hence extracted before $(u',v')$ and $(u,v)$.
        If none are in $Q$, by (\Cref{prop-marked}), $D[a,b]$ is correct.
        But then \[D[a,b] = d(a,b) < d(u',v') < d(u,v) \le D[u,v],\] and $(a,b)$ should have been extracted before $(u,v)$, a contradiction.

        For (\Cref{prop-corr}), suppose we extract $(u,v)$ with $D[u,v] > d(u,v)$ for the sake of contradiction. Notice that every subpath of a shortest ($u$,$v$)-path has weight at most $d(u,v)$ and must hence be already extracted and, by (\Cref{prop-corr}), also correctly computed.
        By (\Cref{prop-subpath}), there must be a marked subpath $(x,y)$ of $(u,v)$ in $Q$.
        But then again, by (\Cref{prop-corr}) and (\Cref{prop-marked}), $D[x,y] = d(x,y)$ and it should be extracted before $(u,v)$.

        Next, we show that (\Cref{prop-subpath}) is still satisfied. The only way it can no longer be satisfied is if the extracted $(u,v)$ was marked and now there is some unmarked $(x,y)$ that does not have the correct distance and no marked subpath. Then, there is a shortest path via $x \to u \to v \to y$.
        Notice that $D[x,u]$ must also be correct, because otherwise $(x,u)$ is marked or there is another marked subpath of $(x,u)$ and hence $(x,y)$. Similarly, $D[v,y]$ must be correct.
        Since $D[x,v]$ is relaxed with $D[x,u] + D[u,v]$ now, it must also be correct, as well as $D[u,y]$.
        This means that $D[x,u]$ and $D[u,y]$ are both correct and unmarked, so they were always correct and unmarked. But then, by assumption that $D[x,y] > d(x,y)$, we know that $(x,y)$ must be in $U$ by \Cref{lem:verify-cert} and hence be marked, a contradiction.

        Furthermore, (\Cref{prop-marked}) clearly still holds by definition of the algorithm.
        After the execution is complete, $D$ holds all correct distances by (\Cref{prop-corr}) and the algorithm is correct.
\end{proof}

\section{Handling negative edge weights}\label{sec:negative-weights}

Our main algorithm crucially relies on nonnegativity of edge weights. In this section, we describe how to preprocess the graph to ensure that this condition holds.
Assume there is no cycle of negative total weight, since shortest paths are not well-defined otherwise.
We employ a standard trick in the area of shortest paths~(see, e.g.,~\cite{Fischer25}), originally due to Johnson~\cite{Johnson77}.
Let graph $G = (V,E,w)$ be an edge-weighted graph and let $\phi: V \to \mathbb{R}$ be a \emph{potential function}. For any edge $(u,v) \in E$, define the \emph{$\phi$-adjusted} edge weights $w_\phi(u,v) \coloneqq w(u,v) + \phi(u) - \phi(v)$.

We state a few observations about this definition. First, in the $\phi$-adjusted graph $G_\phi = (V, E, w_\phi)$, the shortest path lengths stay the same up to the potential of the endpoints, i.e., $d_G(u,v) = d_{G_\phi}(u,v) - \phi(u) + \phi(v)$.
In fact, for fixed $u, v \in V$, the length of any ($u$,$v$)-path gets preserved (when switching between $G$ and $G_\phi$) up to the same offset $\phi(u)-\phi(v)$.
Johnson's trick states that we can always compute a potential $\phi$ such that $w_\phi$ is non-negative.

\begin{observation}[Johnson's trick~\cite{Johnson77}]\label{obs:johnson}
    By choosing $\phi(v) = \min_{u \in V} d(u,v)$, all $\phi$-adjusted weights $w_\phi$ are non-negative.
\end{observation}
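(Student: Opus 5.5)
The plan is to show that for every edge $(u,v) \in E$ we have $w_\phi(u,v) = w(u,v) + \phi(u) - \phi(v) \ge 0$, i.e., $\phi(v) \le \phi(u) + w(u,v)$. This is a triangle-inequality statement for the function $\phi(v) = \min_{x \in V} d(x,v)$. First I will check that $\phi$ is well defined and finite. Since $w_G(v,v) = 0$, we have $d(v,v) \le 0$ for every $v$. Since the graph has no negative cycle, $d(v,v) = 0$, and every $d(x,v)$ is either a real number or $+\infty$. Hence $\phi(v)$ is a real number with $\phi(v) \le 0$, being a minimum over a finite set containing the finite value $d(v,v) = 0$.

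The main step is the triangle inequality. Fix an edge $(u,v)$ and let $x$ attain the minimum in $\phi(u)$, so $\phi(u) = d(x,u)$, which is finite. Concatenating a shortest $(x,u)$-path with the edge $(u,v)$ gives an $(x,v)$-walk of length $d(x,u) + w(u,v)$. Without negative cycles, walk lengths are bounded below by shortest path lengths, so $d(x,v) \le d(x,u) + w(u,v)$. Therefore
\[\phi(v) \le d(x,v) \le d(x,u) + w(u,v) = \phi(u) + w(u,v),\]
which rearranges to $w_\phi(u,v) \ge 0$.

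One subtlety needs care. The minimum over $x$ includes $x = v$ or $x = u$, and the walk may revisit vertices. This is harmless because walk lengths still dominate $d(x,v)$ whenever there is no negative cycle. The only step I expect to need genuine attention is finiteness: I must avoid $\infty - \infty$, and this is guaranteed because $\phi$ takes finite values. Non-edges have weight $\infty$ and remain $\infty$ after adjustment, so they pose no issue.

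Finally, as a remark rather than part of the proof, $\phi$ equals the distance from a virtual source $s$ joined to every vertex by a weight-$0$ edge. This is the usual way to compute $\phi$ with a single-source negative-weight shortest-path algorithm. The observation itself only needs the inequality above.
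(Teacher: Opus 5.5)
Your proof is correct and is the standard argument: the shortest-path potential satisfies the triangle inequality $\phi(v) \le \phi(u) + w(u,v)$, and $\phi$ is finite because $d(v,v)=0$ and there are no negative cycles. The paper gives no separate proof of this observation; it only cites Johnson and notes that $\phi$ is the distance from a virtual source joined to every vertex by weight-$0$ edges, so your argument supplies exactly the reasoning the paper takes for granted.
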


Note that to compute a potential $\phi$ as in \Cref{obs:johnson}, we can simply add one vertex $s$ to $V$ with weights $w(s,v) = 0$ for all $v \in V$ and run any single-source shortest path algorithm that works for negative weights. 

By computing such a potential $\phi$ in the beginning, running our algorithms in \Cref{sec:distance-estimates,sec:verification,sec:fixing} on the $\phi$-adjusted graph, and removing the offset in the end, we can hence guarantee non-negative edge weights. Note that we do not change the certificate. It remains to argue that the prediction error and thus the running time is unchanged.
For Stage 1, consider the matrices $D$ and $D_\phi$ that satisfy the guarantees of \Cref{lem:distance-estimates} for $G$ and $G_\phi$ respectively.
We can verify inductively that $D_\phi[u,v] = D[u,v] - \phi(u) + \phi(v)$ by noticing that for any $w \in V$ we have
\[D[u,w] - \phi(u) + \phi(w) + D[w,v] - \phi(w) + \phi(v) = D[u,w] + D[w,v] - \phi(u) + \phi(v).\]
By the same observation, the detour function in $G_\phi$ remains completely unchanged, i.e. $\detour{D_\phi}{u}{v}{w} = \detour{D}{u}{v}{w}$ since all offsets cancel. 
This already guarantees that $\jump{G_\phi}{C}{p} = \jump{G}{C}{p}$ and $\err{G_\phi}{C}{p} = \err{G}{C}{p}$ and the error is unaffected.

The overhead for running a negative-weight single-source shortest path algorithm depends on the algorithm we choose.
One option is to use one of the near-linear-time scaling-based algorithms~\cite{bernstein2025negative,Fischer25}. These algorithms only work for integer weights, but lead to no overhead in the running time of \Cref{thm:full-alg} for any parameters $q$ and $p$, unless the edge weights are at least exponential.
Another option is to use one of the recent algorithms for negative real weights~\cite{Fineman24,Huang25}, with the fastest such algorithm running in time $\otilde(mn^{4/5}) \le \bigo(n^{2.83})$. For optimal parameter choices $p = \Theta(n^{0.83})$ and $q = \Theta(p)$, our algorithm hence still runs in time $\bigo(n^{2.83} + \eta n)$, even for real and possibly negative edge weights.

\section{The learning problem is NP-hard}\label{sec:hardness}

This section is dedicated to the NP-hardness proof of the learning problem that corresponds to our error measure $\eta$.
In terms of the certificate size, we prove hardness for almost the full range of possible sizes for the certificate sets.
However, the proof holds only for the regime with many instances for which we learn a certificate at the same time.

\begin{theorem}\label{thm:learning-hard}
    For every $\gamma, \delta \in (0,1)$ with $\gamma \le \delta$, \Cref{prob:learning} is NP-hard for $p = n^\gamma$ and $q = n^\delta$ with $k = \Omega(q^2)$. The same holds for $p = n- n^\delta$ and $q = n - n^\gamma$.
\end{theorem}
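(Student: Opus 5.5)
We reduce from Densest $k$-Subgraph, in the form where the densest subgraph must induce at least a prescribed number of edges; this is NP-hard even on regular graphs and for $k$ polynomially related to the number of vertices. This matches the remark in \Cref{sec:open} that learning certificates amounts to choosing $q$ vertices that induce many hyperedges. The idea is to make the prediction error of the learned certificate depend on a single certificate entry $C[s,t]$. Each training instance $G_i$ then encodes one edge $\{a,b\}$ of the D$k$S instance, and $(s,t)$ is counted as erroneous in $G_i$ exactly when $a$ or $b$ is missing from the chosen set.

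\textbf{Construction.} Fix a D$k$S instance $H=(V_H,E_H)$ with $|V_H| = N$. Build a vertex set $V$ containing two special vertices $s,t$, one vertex $x_a$ for each $a \in V_H$, and padding vertices. The padding brings the total to $n$ vertices so that $p=n^\gamma$ and $q=n^\delta$ take the intended values in terms of $N$ and $k$. In every instance, include edges $s\to x_a\to t$ for all $a$ with total weight $L$ for ``normal'' vertices. For the instance $G_{ab}$ associated with edge $\{a,b\}$, make the paths through $x_a$ and $x_b$ strictly shorter. Weights are chosen so that:
\begin{itemize}
\item $d(s,t)$ is attained only through $x_a$ and $x_b$;
\item all other $x_c$ have a detour ranked just after them;
\item padding vertices have huge detour.
\end{itemize}

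All pairs other than $(s,t)$ must be made harmless: they should have direct edges or a unique trivially certifiable path. We need them to contribute to $\eta$ in \emph{no} instance, or in a number of instances independent of $C$. To arrange this, give every other pair a direct edge that is strictly the shortest path, so $D_{G,C}$ is correct regardless of $C$. Then make the certificate entries for these pairs irrelevant, for example by a gap structure where every outside vertex has larger detour than at least $p$ vertices of any $q$-set. This last point is the delicate part, discussed below.

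\textbf{Core analysis.} For pair $(s,t)$ in $G_{ab}$, the entry $D_{G,C}[s,t]$ equals $d(s,t)$ iff $C[s,t]$ contains $x_a$ or $x_b$. The pair $(s,t)$ lies in $\jump{G}{C}{p}$ iff some vertex outside $C[s,t]$ is among the $p$ smallest detours. Use multiplicities or weights so that exactly $p-1$ ``background'' vertices (fixed padding vertices with detour $0$ in all instances) precede $x_a, x_b$. Then $(s,t)$ is verifiable iff $C[s,t]$ contains the background and both $x_a,x_b$. The learner spends $p-1$ slots on background (forced) and has $q-p+1 =: k$ slots for the $x$'s. Hence $\sum_i \err{G_i}{C}{p} = |E_H| - |E(H[S])| + \text{const}$, where $S$ is the chosen set. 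Minimizing the average error is therefore exactly D$k$S with $k = q-p+1$. The number of instances is $|E_H|$, which can be padded (duplicating instances or adding dummy edges) to satisfy $k=\Omega(q^2)$.

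\textbf{Parameter regimes.} For $p=n^\gamma$ and $q=n^\delta$ with $\gamma\le\delta$, choose $N$ and the padding so that $q-p+1$ is the D$k$S budget. The case $\gamma=\delta$ needs care: $q-p$ could be $0$, so let $p$ be set slightly smaller (e.g., $p=q-k+1$) within the asymptotic constraint. The second regime, $p = n-n^\delta$ and $q=n-n^\gamma$, is handled complementarily: the certificate now omits only $n-q=n^\gamma$ vertices. We reduce instead from the complementary selection problem (choosing which vertices to exclude, equivalently D$k$S on the complement count), with the roles of background and candidate vertices swapped.

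\textbf{Main obstacle.} I expect the main obstacle to be controlling the other $n^2-1$ pairs. Their certificate entries also cost $q$ slots each and may contribute to $\eta$ through $\jump{G}{C}{p}$. We must show an optimal learner can always make them contribute a fixed amount, independent of the choice at $(s,t)$. I would first try making each such pair's detour profile identical across all instances. This decouples their optimal certificate entries from $C[s,t]$, so the objective separates as a constant plus the $(s,t)$ term.
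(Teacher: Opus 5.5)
Your core counting step fails. In $\jump{G}{C}{p}$, ties count in favour of the certificate: the condition is $\detour{D}{u}{v}{c}\le\detour{D}{u}{v}{w}$. Also, for the pair $(s,t)$ the vertices $s$ and $t$ themselves always have detour $0$. Now take your $p-1$ background vertices $B$ at detour $0$ and choose $C[s,t]\supseteq B\cup\{s\}$. Then $D[s,t]$ is correct, all detours are nonnegative, and $C[s,t]$ already contains $p$ vertices whose detour is at most that of any excluded vertex. So $(s,t)$ is verified in every instance whether or not $x_a,x_b\in C[s,t]$, and the objective is constant and encodes nothing. (Background vertices at detour $0$ also contradict your requirement that $d(s,t)$ be attained only through $x_a,x_b$.) Correcting the count forces exactly $p$ vertices, including $s,t,x_a,x_b$, at the lowest level. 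That leaves a budget of only $q-p+2$ for the $x$'s, which collapses for $\gamma=\delta$: there the statement fixes $p=q$, and you are not free to shrink $p$.

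Independently, you leave the other $n^2-1$ pairs uncontrolled, and your suggested remedies do not work. Requiring every $q$-set to be good forces at least $n-q+p$ vertices to share the minimum detour; to see this, take $C$ to be the $q$ largest-detour vertices. With direct edges between all pairs, the detour profiles of pairs such as $(s,x_c)$ change with the instance weights. Your additive ``constant'' is then itself the optimum of further selection problems that you would need to solve just to set the threshold. Moreover, $D[s,t]$ enters the detour of $t$ for every pair $(s,y)$, so the objective does not separate.

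The missing idea is to make every irrelevant detour infinite rather than tuning finite levels. The paper reduces from $(q-2)$-Clique. For each edge $\{u,v\}$ of $G$ it builds the instance on $V(G)\cup\{s,t\}$ with only the four unit-weight edges $s\to u\to t$ and $s\to v\to t$. Every entry of $D$ except $(s,t)$ is then correct for any $C$. For every other finite-distance pair, the only finite-detour vertices are its two endpoints, so including them in $C[a,b]$ works for all instances simultaneously. For $(s,t)$, exactly $s,t,u,v$ have detour $0$ and all other vertices have detour $\infty$. Hence for any $4\le p\le q$ the pair is verified iff $\{s,t,u,v\}\subseteq C[s,t]$: an excluded vertex at $\infty$ sees all $q\ge p$ certificate vertices, whereas a missing $u$ sees at most $3<p$.

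The budget is $q-2$ regardless of $p$. The total error is at most $|E(G)|-\binom{q-2}{2}$ iff $G$ has a $(q-2)$-clique. The bound $k=|E(G)|=\Omega(q^2)$ comes for free, since sparser graphs are trivial no-instances. Both parameter regimes, including $p=q$, are handled by the same construction, with no padding, background vertices, or D$k$S.
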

\begin{proof}
    \begin{figure}
        \centering
        \includegraphics[width=0.5\linewidth,page=1]{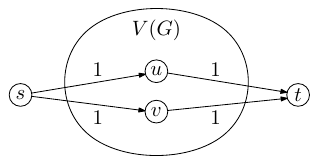}
        \caption{A sketch of the reduction for \Cref{thm:learning-hard}. For every edge $e_i = \{u,v\} \in E(G)$, we create this graph $G_i$. With an optimal choice of the certificate $C$, this graph induces error 0 if and only if $\{u,v\} \subseteq C[s,t]$.}
        \label{fig:learning-hard}
    \end{figure}

    We reduce from $(q - 2)$-Clique, where we are given a graph $G$ with $|V(G)| = n$ and are supposed to decide if it contains a clique on $q - 2$ vertices. 
    Notice that since $q = n^\delta$, or $q = n - n^\gamma$, that problem is NP-hard by a standard reduction from the normal Clique problem (see, e.g., Chapter 7 in~\cite{sipser1996introduction}).

    For every edge $e_i = \{u,v\} \in E(G)$, we create a graph $G_i$ with the following vertices, edges, and weights (see \Cref{fig:learning-hard}). Set 
    \begin{itemize}
        \item $V(G_i) \coloneqq V(G) \cup \{s,t\}$,
        \item $E(G_i) \coloneqq \{(s,u), (u,t), (s,v), (v,t)\}$, and
        \item for all $e \in E(G_i)$, set $w(e) \coloneqq 1$.
    \end{itemize}
    
    We claim that $G$ contains a clique of size $q-2$ if and only if there is a certificate $C$ with $|C[u,v]| = q$ for the $|E(G)|$ graphs with error at most $|E(G)| - \binom{q-2}{2}$.

    Suppose $G$ contains a clique $S = \{s_1, \dots, s_{q-2}\}$. Consider the certificate that is defined as follows. 
    \begin{itemize}
        \item We set $C[s,t] \coloneqq (s, t, s_1, \dots, s_{q-2})$.
        \item All other lists $C[u,v]$ are filled with $u$, $v$ and $q-2$ other vertices with minimum index.
    \end{itemize}

    Let $e_i = \{u,v\} \in E(G)$. We are left with the following situation. The only finite distances in $G_i$ are $d_{G_i}(s,u)$, $d_{G_i}(s,v)$, $d_{G_i}(u,t)$, $d_{G_i}(v,t)$, and $d_{G_i}(s,t)$. All entries of the distance estimates $D_i$ for $G_i$ are automatically correct from the initialization except for $D_i[s,t]$. The entry $D_i[s,t]$ is correctly filled if and only if at least one of $u$ and $v$ is in $C[s,t]$. Finally, $(s,t)$ does not count into the error term $\err{G_i}{C}{p}$ if and only if \emph{both of $u$ and $v$} and $s$ and $t$ are in $C[s,t]$, since all other intermediate vertices yield an infinite detour.

    This allows us to conclude that the error as stated in \Cref{prob:learning} is exactly \[ |\{\{u,v\} \in E(G) \colon \{u,v\} \not\subseteq C[s,t]\}| = |E(G)| - \binom{q-2}{2}. \]

    Consider now the other direction and suppose we have a certificate $C$ that results in $D_i$ with a total error of at most $|E(G)| - \binom{q-2}{2}$. Without loss of generality, we can assume that for every $a,b \in V(G_i)$ except for $a = s$ and $b = t$ the list $C[a,b]$ are chosen optimally. That is because $D_i[a,b]$ is always correct and the lists have no influence on other values of $D_i$. Then, $(a,b)$ cannot contribute to the $i$-th error term. Hence, it is enough to focus on $s$ and $t$. As noticed before, $(s,t)$ for $e_i = \{u,v\}$ will contribute to the error term $\err{G_i}{C}{p}$ if and only if $\{s,t,u,v\} \not\subseteq C[s,t]$. We can assume that $\{s,t\} \subseteq C[s,t]$, otherwise the error is at least $|E(G)|$. Furthermore, by the supposed error term, for at least $\binom{q-2}{2}$ edges $\{u,v\}$, both $u$ and $v$ appear in $C[s,t]$. Thus, $C[s,t] \setminus \{s,t\}$ must be a clique of size $q-2$.

    Since the reduction function is clearly polynomial, this proves the hardness. For the parameters, notice that clique becomes trivial if there are less than $q-2$ edges in $G$.
\end{proof}

\section{Empirical estimation of prediction error}
\label{sec:empirical}

In order to understand whether in practice one can expect our prediction error to be substantially smaller than $n^2$, we ran a simulation on some real-world data. Our code is available at \url{https://github.com/adampolak/warm-starting-apsp}.

We use two datasets consisting of network latencies between fixed sets of nodes over multiple time slices~\cite{ZhuLNLZ17}, available at \url{https://github.com/uofa-rzhu3/NetLatency-Data/}. We remark that, even though road networks are a textbook example application of APSP, in reality they are extremely sparse, and hence simply running Dijkstra from each node already gets one very close to a quadratic running time, which is a hard theoretical lower bound because of the output size. Network latency graphs constitute an example of real-world dense graphs on which people compute APSP~\cite{ZhuLNLZ17}.

For each of the two datasets, we independently sampled $10$ graph snapshots. For each sampled graph, we computed a ground-truth certificate by solving APSP exactly. We then used this certificate as a prediction for the next graph snapshot in the time sequence. We calculated the prediction error for our algorithm with parameters set to $p=\frac{1}{2}\sqrt{n}$ and $q=2\sqrt{n}$.

The first dataset, called Seattle, consists of $t=688$ snapshots of a graph on $n=99$ nodes. The average prediction error that we calculated equals $\eta_{\text{mean}} = 121.2$ (with the standard deviation $\sigma=67.8$), which is only slightly more than $n$ and significantly less than the worst-case upper bound of $n^2$.

The second dataset, called PlanetLab, consists of $t=18$ snapshots of a graph on $n=490$ nodes. The average prediction error that we calculated equals $\eta_{\text{mean}} = 12\,810.8$ (with the standard deviation $\sigma=4887.4$), which is close to $n^{1.5}\approx10\,846.6$.

We conclude that it is not unreasonable to hope for $\eta \ll n^2$ in practice. Of course, it would be more desirable to not only calculate the prediction error but also to implement our algorithm and directly measure its running time. An obstacle to that is that our algorithm requires a fast matrix multiplication algorithm, and those theoretically fast algebraic algorithms are famously known for being impractical. This does not necessarily mean that our algorithm is hopelessly impractical as well. Indeed, given that matrix multiplication lies at the core of deep neural networks, people put a lot of engineering effort, both at software and hardware levels, to make it fast in practice. Such optimizations could also make an implementation of our algorithm competitive, but this is beyond the scope of this theory paper.

\bibliography{main}

\end{document}